
\documentclass[conference,letter]{IEEEtran}
\IEEEoverridecommandlockouts

%\setlength{\parskip}{1ex} %--skip lines between paragraphs
%\setlength{\parindent}{15pt} %--don't indent paragraphs
%-- Commands for header
\usepackage{color}

\usepackage{graphicx,tabularx,array,amsmath,amsthm,thmtools}

%\onecolumn
\usepackage{mathtools}

\usepackage{caption}
\usepackage{amsfonts}
\usepackage{bm}
\usepackage{bbm}
\usepackage{makecell}
\usepackage{multirow}
 \usepackage{amssymb}
\usepackage{txfonts}
\usepackage[T1]{fontenc}
\usepackage{tikz}
\usepackage[scr=dutchcal]{mathalfa}

\usepackage{cite}

\newtheorem{Theorem}{Theorem}
\newtheorem{Proposition}{Proposition}
\newtheorem{Lemma}{Lemma}

\newtheorem{Remark}{Remark}
\newtheorem{Definition}{Definition}

\ifCLASSINFOpdf
  % \usepackage[pdftex]{graphicx}
  % declare the path(s) where your graphic files are
  % \graphicspath{{../pdf/}{../jpeg/}}
  % and their extensions so you won't have to specify these with
  % every instance of \includegraphics
  % \DeclareGraphicsExtensions{.pdf,.jpeg,.png}
\else
  % or other class option (dvipsone, dvipdf, if not using dvips). graphicx
  % will default to the driver specified in the system graphics.cfg if no
  % driver is specified.
  % \usepackage[dvips]{graphicx}
  % declare the path(s) where your graphic files are
  % \graphicspath{{../eps/}}
  % and their extensions so you won't have to specify these with
  % every instance of \includegraphics
  % \DeclareGraphicsExtensions{.eps}
\fi
\hyphenation{op-tical net-works semi-conduc-tor}

\begin{document}
%
% paper title
% Titles are generally capitalized except for words such as a, an, and, as,
% at, but, by, for, in, nor, of, on, or, the, to and up, which are usually
% not capitalized unless they are the first or last word of the title.
% Linebreaks \\ can be used within to get better formatting as desired.
% Do not put math or special symbols in the title.
\title{Optimal Active Social Network De-anonymization Using Information Thresholds }

% author names and affiliations
% use a multiple column layout for up to three different
% affiliations
\author{\IEEEauthorblockN{ Farhad Shirani}
\IEEEauthorblockA{Department of Electrical\\and Computer Engineering\\
New York University\\
New York, New York, 11201\\
Email: fsc265@nyu.edu}
\and

\IEEEauthorblockN{Siddharth Garg}
\IEEEauthorblockA{Department of Electrical\\and Computer Engineering\\
New York University\\
New York, New York, 11201\\ 
Email: siddharth.garg@nyu.edu}

\and

\IEEEauthorblockN{Elza Erkip }
\IEEEauthorblockA{Department of Electrical\\and Computer Engineering\\
New York University\\
New York, New York, 11201\\
Email: elza@nyu.edu}
}
%\and
%\IEEEauthorblockN{James Kirk\\ and Montgomery Scott}
%\IEEEauthorblockA{Starfleet Academy\\
%San Francisco, California 96678--2391\\
%Telephone: (800) 555--1212\\
%Fax: (888) 555--1212}}

% conference papers do not typically use \thanks and this command
% is locked out in conference mode. If really needed, such as for
% the acknowledgment of grants, issue a \IEEEoverridecommandlockouts
% after \documentclass

% for over three affiliations, or if they all won't fit within the width
% of the page, use this alternative format:
% 
%\author{\IEEEauthorblockN{Michael Shell\IEEEauthorrefmark{1},
%Homer Simpson\IEEEauthorrefmark{2},
%James Kirk\IEEEauthorrefmark{3}, 
%Montgomery Scott\IEEEauthorrefmark{3} and
%Eldon Tyrell\IEEEauthorrefmark{4}}
%\IEEEauthorblockA{\IEEEauthorrefmark{1}School of Electrical and Computer Engineering\\
%Georgia Institute of Technology,
%Atlanta, Georgia 30332--0250\\ Email: see http://www.michaelshell.org/contact.html}
%\IEEEauthorblockA{\IEEEauthorrefmark{2}Twentieth Century Fox, Springfield, USA\\
%Email: homer@thesimpsons.com}
%\IEEEauthorblockA{\IEEEauthorrefmark{3}Starfleet Academy, San Francisco, California 96678-2391\\
%Telephone: (800) 555--1212, Fax: (888) 555--1212}
%\IEEEauthorblockA{\IEEEauthorrefmark{4}Tyrell Inc., 123 Replicant Street, Los Angeles, California 90210--4321}}

% use for special paper notices
%\IEEEspecialpapernotice{(Invited Paper)}

% make the title area
\maketitle

% As a general rule, do not put math, special symbols or citations
% in the abstract
\begin{abstract}
In this paper, de-anonymizing internet users by actively querying their group memberships in social networks is considered. In this problem, an anonymous victim visits the attacker's website, and the attacker uses the victim's browser history to query her social media activity for the purpose of de-anonymization using the minimum number of queries.
A stochastic model of the problem is considered where the attacker has partial prior knowledge of the group membership graph and receives noisy responses to its real-time queries. The victim's identity is assumed to be chosen randomly based on a given distribution which models the users' risk of visiting the malicious website. A de-anonymization algorithm is proposed which operates based on information thresholds and its performance both in the finite and asymptotically large social network regimes is analyzed. Furthermore, a converse result is provided which proves the optimality of the proposed attack strategy.
\end{abstract}
 
% no keywords

% For peer review papers, you can put extra information on the cover
% page as needed:
% \ifCLASSOPTIONpeerreview
% \begin{center} \bfseries EDICS Category: 3-BBND \end{center}
% \fi
%
% For peerreview papers, this IEEEtran command inserts a page break and
% creates the second title. It will be ignored for other modes.
\IEEEpeerreviewmaketitle

\section{Introduction}
Preserving user privacy is a key obstacle to the continued digitalization of the society.  Consumers increasingly view the ability to connect, communicate and collaborate over the internet without risking their personal information as an absolute necessity.
Unfortunately, this is far from the case in practice. For instance, websites track users to serve them with targeted digital advertisements. More disturbingly, web tracking can be used by state actors to stifle individuals' free speech rights, or target vulnerable minority groups. As a result, there is an urgent need to understand and quantify web users' privacy risk, that is, what is the likelihood that users on the internet can be uniquely identified using their \emph{online fingerprints}?

A user's fingerprint is the set of attributes that reflect the user's web activities: websites the user has visited and social network groups that a user is a member of ~\cite{attribute1,attribute2}, characteristics of the user's web browser (font size, for example) ~\cite{browser1}, and physical device features ~\cite{hardware1}. Fingerprinting based de-anonymization attacks build on the empirical observation that, for a large enough set of attributes, a user's fingerprints are unique.
The challenge, from an attacker's standpoint, is that an unknown user's fingerprints may not be accurately or easily available; i.e., fingerprints may be noisy and the attacker may have to actively query user devices, one attribute at a time, to measure their fingerprint. However, an attacker may only be able to issue a limited number of queries to the user's device. 

Recently, Wondracek et al.~\cite{kruegel} proposed a practical fingerprinting attack strategy which uses the user's social media group memberships for the purposes of de-anonymization. In this strategy, the attacker runs a malicious website and seeks to  de-anonymize users who visit the website. To this end, the attacker first 
uses a web gather to scrape the group memberships of users in the social network. This serves as the attacker's prior knowledge of the network which can be represented as a bipartite graph.  This is shown in the left illustration in Figure~\ref{fig:fingerprinting}. 
Note that the attacker's scanned version of the group membership graph might be different from the actual  network graph because of users' privacy settings that act as a source of noise.

When an unknown user visits the attacker's website, the attacker queries social network group memberships to find the user's identity. This is done by using browser history sniffing to ask questions of the form ``is the webpage of social network group $v$ in the user's browser history?" If yes, the attacker assumes that the user is a member of the social network 
group $v$, and if no then the attacker assumes the user is not a member of $v$. Of course, a user might be a member of a group they have not visited, or conversely, might not be a member of a group they have visited; consequently, the attacker's measurement is noisy. In this way, the attacker obtains the unknown user's partial fingerprint. This is shown in the right illustration in Figure~\ref{fig:fingerprinting}. 
By matching the partial fingerprint in the two graphs, the unknown user is de-anonymized.

\begin{figure}
\begin{center}
\includegraphics[width=0.4\textwidth]{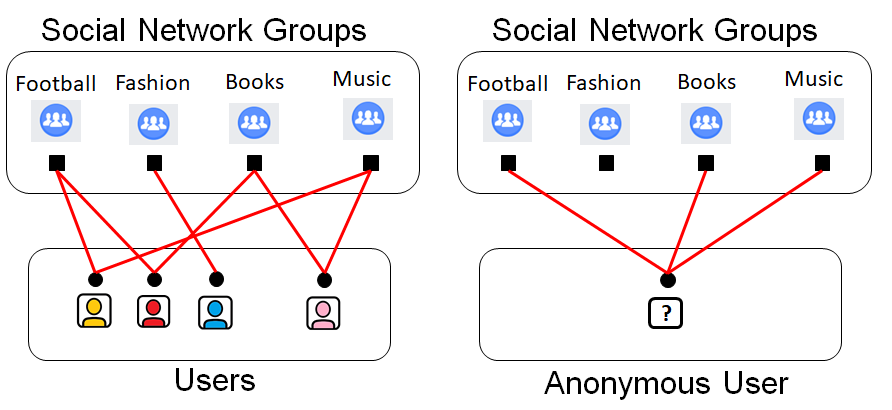}
\caption{(Left) An example of a group membership bipartite graph. (Right) A user is to be de-anonymized based on partial fingerprints.}
\label{fig:fingerprinting}
\end{center} 
\end{figure} 

%\paragraph{Shortcoming of attack strategy:} 

Although effective, 
Wondracek et al.'s attack does not answer 
%queries and pre-determined number of attributes in random order, 
fundamental questions about the {optimal} number and type of group memberships to query, and the order in which to issue queries.
Other fingerprinting attacks proposed in literature~\cite{kruegel, finger1,finger2,finger3,finger4,finger5} have also adopted similar 
ad-hoc approaches without theoretical guarantees or analyses.

In ~\cite{Allerton}, we proposed a mathematical formulation for Wondracek et al.'s 
browser history sniffing attack~\cite{kruegel}, introduced a new strategy by making analogies to channel coding, and quantified the amount of information the attacker obtains from each query. This formulation is summarized in Section \ref{sec:form}. We showed that under the assumption that users are equally likely to visit the attacker's website, the total number of queries required for de-anonymization grows logarithmically in the number of users. Furthermore, the coefficient of the logarithm is inversely proportional to the mutual information between the random variables corresponding to the edges in the two graphs in Figure \ref{fig:fingerprinting}.

In ~\cite{Allerton}, we assumed that the victim's index is distributed uniformly over the set of all user indices. However,  internet users typically vary in terms of their activity levels on the web. More active users would be more likely to visit an attacker's website, resulting in a non-uniform distribution on the user index $J$. %Given the analogy made in ~\cite{Allerton} between active de-anonymization and random channel coding, it is tempting to hypothesize that in the case of non-equiprobable users, source-channel coding strategies from information theory will be useful ~\cite{ShJSC,VJSC}. However, the fact that in the fingerprinting problem the attacker does not have control over the bipartite graph between the users and the groups suggests that  the standard coding ideas based on Shannon's separability theorem ~\cite{source_channel} are not directly applicable. 
In this work, we propose an information-threshold-based de-anonymization strategy which builds upon the joint source-channel coding methods studied in ~\cite{Kostinajoint,bursh,source_channel} to devise fingerprinting attacks when the user distribution is not uniform. Roughly speaking, in the new strategy, the attacker would query the selected anonymous user's attributes sequentially and calculate the amount of information obtained, i.e. the amount of uncertainty regarding each user index based on previous query responses. The attack ends when the uncertainty is lower than a given threshold for one of the user indices. In this strategy, the
user distribution only affects the initial information values. Furthermore, we provide a tight converse which proves that the strategy is optimal in terms of expected number of queries required for de-anonymization.

The rest of the paper is organized as follows: In Section \ref{sec:form}, we provide the problem formulation. In Section III, we introduce and analyze the new attack strategy. Section IV includes the converse result for active de-anonymization attacks. Section V concludes the paper.

\section{Problem Formulation}
\label{sec:form}
In this section, we provide a rigorous formulation of the active fingerprinting problem. This is a generalization of the formulation provided in ~\cite{Allerton}. We model the group memberships in the social network by a bipartite graph. The bigraph is defined below:
\begin{Definition}
An $(n,m)$-bigraph is a structure $g_{n,m}=(\mathcal{U},\mathcal{R},{\mathcal{E}})$, where $(\mathcal{U}\bigcup \mathcal{R}, {\mathcal{E}})$ is a graph such that $|\mathcal{U}|=n$ and $|\mathcal{V}|=m$. The set $\mathcal{V}=\mathcal{U}\bigcup \mathcal{R}$ is called the vertex set and is partitioned into two subsets: 1) the user set $\mathcal{U}=\{u_1,u_2,\cdots,u_m\}$, and 2) the group set $\mathcal{R}=\{r_1,r_2,\cdots,r_n\}$. The set ${\mathcal{E}}\subseteq \{(i,j)|i\in [1,m], j\in[1,n]\}$ is called the edge set. 
\end{Definition}

\begin{Definition}
\label{def:vecs}
Let $g_{n,m}$ be an $(n,m)$-bigraph characterized by the triple $(\mathcal{U},\mathcal{R},{\mathcal{E}})$, then,
\begin{itemize}
\item{For a group $r^j, j\in [1,n]$, the set ${\mathcal{E}}_j=\{i| (i,j)\in \mathcal{E}\}, j\in [1,n]$ is called the set of members of $r_j$.}
\item{ For a user $u_i, i\in [1,m]$, the set $\mathcal{F}_i=\{j|(i,j)\in \mathcal{E}\}, i\in [1,m]$ is called the set of groups associated with $u_i$.}
\item{ The group signature of user $u_i$ is the vector $\underline{F}_i= (F_{i,1},F_{i,2},\cdots,F_{i,n})$, where
\begin{align*}
F_{i,k}=
\begin{cases}
 1 \qquad &\text{if } u_i\in \mathcal{E}_k,\\
0 &\text{otherwise}.
\end{cases}
\end{align*}}
\item{ The vector $F_{i,n_1}^{n_2}=(F_{i,n_1},F_{i,n_1+1},\cdots, F_{i,n_2})$ is called a partial group signature of $u_i$, where $1\leq n_1\leq n_2\leq n$.
}
\end{itemize}
\end{Definition}

\begin{Definition}
An $(n,m,p)$-random bigraph is a structure $g_{n,m,p}=(\mathcal{U},\mathcal{R},\pmb{\mathcal{E}})$, such that i) $g_{n,m}=(\mathcal{U},\mathcal{R},\pmb{\mathcal{E}})$ is an $(n,m)$-bigraph, and ii) The edge set $\pmb{\mathcal{E}}$ is generated randomly based on the distribution:
\begin{align*}
 P(\pmb{\mathcal{E}}=\mathcal{E})= p^{|\mathcal{E}|}(1-p)^{\left(nm- |\mathcal{E}|\right)},
\end{align*}
where $\mathcal{E} \subseteq \{(i,j)|i\in [1,m], j\in[1,n]\}$.
\end{Definition}

\begin{Definition}
A correlated pair of random bigraphs (CPRB) is a pair $\underline{g}_{n,m,P_{E_0,E_1}}=(g^0_{n,m,p_0},g^1_{n,m,p_1})$, where i) $g^{i}_{n,m,p_i}, i\in \{0,1\}$ is an $(n,m,p_i)$-random bigraph with $P(E_i=1)=p_i, i\in \{0,1\}$, and ii) The edge sets $\pmb{\mathcal{E}}^i, i\in \{0,1\}$ are generated based on the joint probability distribution: 
 \begin{align*}
 &P\left(\mathbbm{1}(e^0\in \pmb{\mathcal{E}}^0) =  \alpha, \mathbbm{1}(e^1\in \pmb{\mathcal{E}^1})=\beta)\right)=
\begin{cases}
P_{E_0,E_1}(\alpha,\beta),& \hspace{-0.1in}\text{if } e^0=e^1\\
P_{E_0}(\alpha)P_{E_1}(\beta), & \hspace{-0.1in}\text{Otherwise}
\end{cases}
\end{align*}
where $ \alpha,\beta\in \{0,1\}^2$, $e^1,e^2\in \{(i,j)|i\in [1,m], j\in[1,n]\}$, and $\mathbbm{1}(\cdot)$ is the indicator function.
\end{Definition}
The following defines an active de-anonymization problem:
\begin{Definition}
 An active de-anonymization problem is characterized by the quadruple $(\underline{g}_{n,m,P_{E_0,E_1}}, P_{J_{[m]}}, P^{UID}_{Y|Z}, P^{GM}_{Y|Z})$, where $\underline{g}_{n,m,P_{E_0,E_1}}=(g^0_{n,m,p_0},g^1_{n,m,p_1})$ is a CPRB, the random variable $J_{[m]}$ is defined on the alphabet $\mathcal{J}_{[m]}=[1,m]$, and the variables $Y$ and $Z$ are binary random variables. 
 \end{Definition}
 In the above definition,  $g^0_{n,m,p_0}$ represents the group membership graph and $g^1_{n,m,p_1}$ represents the attacker's scanned graph. The random variable $J_{[m]}$ is the index of the user $u_{J_{[m]}}$ which is to be de-anonymized. For brevity, we write $J$ instead of $J_{[m]}$ when there is no ambiguity.
 We assume a memoryless and time-invariant stochastic noise model for the responses to the attacker's queries.  The variable $Z$ represents the correct response to the attacker's query, where $Z=1$ indicates a `yes' response, and   the variable $Y$ represents the noisy response received by the attacker after making the query. As explained in ~\cite{Allerton}, the set of possible queries is divided into two categories: i) User identity (UID): a UID query asks if the unknown victim $u_J$ is user $u_j$ in the network, and ii) group Membership Queries (GM): a GM asks if the unknown victim $u_J$ is a member of the group $r_i$.  Hence, the conditional distributions $P^{UID}_{Y|Z}$ and $P^{GM}_{Y|Z}$ are the distribution of the response received by the attacker given the correct response to the attacker's UID and GM queries, respectively. Due to practical considerations explained in ~\cite{Allerton}, we assume that the responses to the UID queries are received noiselessly (i.e. $P^{UID}_{Y|Z}=I_{2}$, where $I_2$ is the unitary matrix). Loosely speaking, UID queries provide information regarding a single user index whereas GM queries provide information regarding all of the user indices in the corresponding group. As a result, it is often desirable to send GM queries when there is uncertainty among a large subset of user indices and UID queries when it is verified that the anonymous user is in a small subset of users.
 At time $t\in\mathbb{N}$, the attacker uses the vector of prior responses $Y_{1}^{t-1}$ and the scanned graph $g^1_{n,m,p_1}$ to choose the next query $x_t$. An attack strategy provides the sequence of functions $x_t(Y_{1}^{t-1},g^1_{n,m,p_1})$. If the attacker sends the UID query corresponding to the user $u_j$,  we write $x_t=u_j$, and if it sends the GM query corresponding to the group $r_i$,  we write $x_t=r_i$.
\begin{Definition}
\label{Def:rp}
 An attack strategy $\chi$ for the active de-anonymization problem $(\underline{g}_{n,m,P_{E_0,E_1}}, P_{J}, P^{UID}_{Y|Z}, P^{GM}_{Y|Z})$ is defined as a sequence of functions $x_t: g^1_{n,m,p_1}\times\{0,1\}^{(t-1)}\to\mathcal{R}\cup \mathcal{U}, t\in \mathbb{N}$. The random process $Z_t, t\in \mathbb{N}$ is defined as the sequence of correct responses:
 \begin{align*}
 Z_t=
 \begin{cases}
 1\qquad &\text{if } x_t=r_i \hspace{0.1in}\& \hspace{0.1in}J\in \mathcal{E}^0_{r_i} \text{ or }x_t=u_J,\\
 0 \qquad &\text{Otherwise.}
\end{cases}
\end{align*}
The random process $Y_t, t\in \mathbb{N}$ is the sequence of received responses. The Markov chains $Y_t\leftrightarrow Z_t \leftrightarrow x^t,Y^{t-1},Z^{t-1} ,t\in \mathbb{N}$ hold. Furthermore, 
\begin{align*}
 P_{Y_t|Z_t}=
 \begin{cases}
 P^{UID}_{Y|Z}\qquad &\text{if } x_t\in \mathcal{U}\\
P^{GM}_{Y|Z}\qquad &\text{if } x_t\in \mathcal{R}
\end{cases}
. 
\end{align*}
Finally, the random process $U_t, t\in \mathbb{N}$ is the sequence of expected responses based on the scanned graph $g^1_{n,m,p_1}$: 
 \begin{align*}
 U_t=
 \begin{cases}
 1\qquad &\text{if } x_t=r_i \hspace{0.1in}\& \hspace{0.1in}J\in \mathcal{E}^1_{r_i} \text{ or }x_t=u_J,\\
 0 \qquad &\text{Otherwise.}
\end{cases}
\end{align*}
\end{Definition}
The performance of an attack strategy is measured based on the average number of queries required for successful de-anonymization. 

\begin{Definition}
 For a given attack strategy characterized by $\chi= (x_{t}(y_1^{t-1}))_{t\in\mathbb{N}}$, the number of queries required for successful de-anonymization is defined as $Q\triangleq \min\{t| (x_t,y_t)= (u_J,1)\}$. 
 \end{Definition}
 \begin{Definition}
 For the active de-anonymization problem $(\underline{g}_{n,m,P_{E_0,E_1}}, P_{J}, P^{UID}_{Y|Z}, P^{GM}_{Y|Z})$, the minimum expected number of queries is defined as:
 \[\bar{Q}\triangleq\min_{x_t:\{0,1\}^{(t-1)}\to\mathcal{R}\cup \mathcal{U}, t\in \mathbb{N}}\mathbb{E}(Q),\]
 where the expectations is over  $\underline{g}_{n,m,P_{E_0,E_1}}$, $J$ and $Y_t,Z_t, t\in \mathbb{N}$. 
\end{Definition}

\section{The Information Threshold Strategy}
\label{sec:schemes_ITS}
In ~\cite{Allerton}, we considered the de-anonymization problem under the assumption that the victim's index $J$, is distributed uniformly over the index set $[1,m]$. Here, we relax this assumption and consider the problem when the victim's index is distributed according to an arbitrary probability distribution $P_J$ which is known a-priori to the attacker. This models the varying levels of risk averseness and risk tolerance for users over the network. More risk tolerant users are more likely to visit an attacker's website. Hence the corresponding user index is more likely to be de-anonymized. 

The information threshold strategy (ITS) is a de-anonymization strategy which operates by comparing the \emph{`information value'} of each user index with an information threshold. The information values represent the amount of uncertainty in each user index given the received query responses.
We provide bounds on the performance of the ITS for arbitrary $P^{GM}_{Y|X}$ and $P_J$, and show that the expected number of queries is proportional to the user index entropy $H(J)$ and is inversely proportional to the mutual information $I(U;Y)$ between the random variables corresponding to the received and expected responses. When the index $J$ is uniformly distributed, the first order performance (i.e. the coefficient of the logarithmic term for the expected number of queries) of the ITS is the same as that of the typical set strategy (TSS) in \cite{Allerton}. However,  
the ITS outperforms the TSS in the second order sense in the expected number of queries (i.e. the coefficient of the terms which are asymptotically smaller than the logarithmic term is improved.).
We proceed to describe the strategy.
 \begin{Definition}
For the pair of random variables $(X,Y)$ defined on the probability space $(\mathcal{X}\times\mathcal{Y}, 2^{\mathcal{X}\times\mathcal{Y}}, P_{X,Y})$, where $\mathcal{X}$ and $\mathcal{Y}$ are finite sets, the information density is defined as:
\begin{align*}
 i_{X;Y}(x;y)=\log{\frac{P_{Y|X(y|x)}}{P_{Y}(y)}}, \qquad \forall (x,y)\in \mathcal{X}\times\mathcal{Y}.
\end{align*}
\end{Definition}
Let $U_t, Y_t$ and $Z_t$ be binary random processes defined in Definition \ref{Def:rp}. Fix $\epsilon>0$ and $l\in \mathbb{N}$. Also, define the joint distribution:
\begin{align}
 P_{U,Y,Z}(u,y,z)\triangleq P_{Z}(z)P_{E_1|E_0}(u|z)P^{GM}_{Y|Z}(y|z).
 \label{eq:prob}
\end{align}

At the initial stage, the attacker forms an $m$-length vector of \emph{`information values'} corresponding to the vector of users $(u_i)_{i\in [1,m]}$. The initial information value for user $j\in [1,m]$ is defined as: 
\begin{align*}
 I_{0}(j)\triangleq \log{\frac{1}{P_{J}(j)}}.
\end{align*}
Once the vector of information values is formed, the attack progresses in $l$ steps.  In each step, the attacker sends group membership queries sequentially starting with the query corresponding to the first group $r_1$, and proceeding by increasing the group index. Hence, it makes queries regarding the values of $Z^{\tau^*}=(F_{J,1}, F_{J,2}, \cdots,F_{J,\tau^*})$ defined in Definition \ref{def:vecs}, where $\tau^*$ is defined later. The attacker receives the noisy sequence of responses $(Y_1,Y_2,\cdots, Y_{\tau^*})$. The information value query number $n'$ is defined below:
\begin{align*}
 I_{n'}(j)=i_{U^{n'};Y^{n'}}(U^{n'},Y^{n'})-I_0(j), j\in [1,m], n'\in [1,\tau^*],
\end{align*}
GM queries are concluded if there exists at least one index $j$ for which the information value $I_{\tau^*}(j)$ is larger than the information threshold $\log{\frac{1}{\epsilon}}$. More precisely, the attacker defines
\begin{align*}
 \tau_j=\min\{n'\geq 0: I_{n'}(j)\geq \log{\frac{1}{\epsilon}}\},j \in[1,m], 
\end{align*}
and the stopping time
$ \tau^*=\min_{j\in [1,m]} \tau_j$. The attacker then finds the user with the maximum information value:
\begin{align*}
\widehat{J}=argmax_{j\in [1,m]} I_{\tau^*}(j).
\end{align*}
Next, the attacker sends a  UID query to verify that the user index is equal to $\widehat{J}$. In other words, it transmits $x_{\tau^*+1}=u_{\widehat{J}}$. The algorithms ends if the output $y_{\tau^*+1}=1$  is received. If the attacker fails to recover $J$ in this step, it proceeds to the next step. In summary, the attack strategy in the first step is given below:
 \begin{align*}
 x_t= 
 \begin{cases}
 r_t\qquad &\text{ if } t\leq \tau^*,\\
 u_{i_{\widehat{J}}}& \text{ if } t=\tau^*+1.%\\
%u_{i-|\mathcal{L}|-n'}& \text{ if } n'+|\mathcal{L}|<t.
\end{cases}
\end{align*}
If the UID returns a negative response, in the next step, the attacker resets the information values to their initial values. It repeats the previous step for the next set of group indices. If the attack strategy reaches the $l$th step, all of the possible UID's are sent for all remaining users until the user is de-anonymized. The number of steps $`l'$ is chosen such that the probability of reaching the $l$th step is small enough.

We denote the number of queries in this attack strategy by $Q_{ITS}$. The following theorem provides bounds on the expected number of queries: 
\begin{Theorem}
For the ITS strategy:
 \begin{align}
\mathbb{E}(Q_{ITS})\leq \frac{1}{(1-\epsilon)}\left(\frac{H(J)+\log\frac{1}{\epsilon}+i_{max}}{I(U;Y)}+1\right)+\frac{m}{2}\epsilon^l ,
\end{align}
where $i_{max}\triangleq \max_{u,y} i_{U;Y}(u;y)$, and $P_{U,Y}$ is defined in equation \eqref{eq:prob}, provided that the number of groups `$n$' satisfies 
 \begin{align}
 \frac{1}{(1-\epsilon)\epsilon^l}\left(\frac{H(J)+\log\frac{1}{\epsilon}+i_{max}}{I(U;Y)}+1\right)<n.
 \label{eq:g1}
\end{align}
Particularly, if $\epsilon= \frac{\log\log{m}}{\log{m}}$, and $l\triangleq \frac{\log{m}}{\log\log{m}-\log\log\log{m}}$, then the inequality 
\begin{align}
 \mathbb{E}(Q_{ITS})\leq  \frac{H(J_{[m]})}{I(U;Y)}+O(\log\log{m})
\label{eq:bound4}
\end{align}
 holds, provided that
\begin{align}
   \frac{H(J_{[m]})}{I(U;Y)}\log\log{m}<n.
   \label{eq:g2}
\end{align}

\label{thm:4}
%\footnote{Alternatively, one can write $\underset{m \to \infty}{limsup}\frac{\mathbb{E}(q)}{\log{m}}< \infty $.}
\end{Theorem}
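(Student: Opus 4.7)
The plan is to decompose $Q_{ITS}$ across the $l$ steps of the strategy, and to bound separately (i) the expected stopping time $\tau^*$ within a single step, (ii) the per-step probability of proceeding to the next step (i.e.\ of the UID verification failing), and (iii) the cost of the terminal ``fallback'' step.

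First I would bound $\mathbb{E}[\tau^*]$ via a Wald-type stopping-time argument applied to the random walk $I_{n'}(J)$ associated with the true user. For $j=J$, the pair $(U^{n'},Y^{n'})$ is i.i.d.\ from the joint $P_{U,Y}$ in \eqref{eq:prob}, so the increments $i_{U;Y}(U_k;Y_k)$ of the information density are i.i.d.\ with mean $I(U;Y)>0$. The stopping criterion $I_{\tau_J}(J)\geq \log(1/\epsilon)$ together with the single-letter overshoot bound $i_{max}$ and Wald's identity gives
\begin{equation*}
\mathbb{E}[\tau^*\mid J=j]\;\leq\;\mathbb{E}[\tau_J\mid J=j]\;\leq\;\frac{\log(1/P_J(j))+\log(1/\epsilon)+i_{max}}{I(U;Y)}.
\end{equation*}
Averaging over $J$ turns the first term into $H(J)/I(U;Y)$.

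Next I would bound the per-step error $P(\widehat{J}\neq J)$ by $\epsilon$ using a change-of-measure / martingale maximal inequality. For any wrong candidate $j\neq J$, the scanned-graph outputs $U^{n'}_j$ are independent of the received responses $Y^{n'}$ (which are driven by $u_J$), so under the product measure $P_U^{\otimes n'}P_Y^{\otimes n'}$ the likelihood ratio $\exp\bigl(i_{U^{n'};Y^{n'}}(U^{n'}_j;Y^{n'})\bigr)$ is a nonnegative martingale of mean one. Doob's maximal inequality gives $P\bigl(\sup_{n'}I_{n'}(j)\geq \log(1/\epsilon)\bigr)\leq P_J(j)\,\epsilon$; summing over $j\neq J$ yields the per-step error bound $\epsilon$. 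Because distinct steps use disjoint group indices, their error events are independent, hence the probability of reaching step $k$ is at most $\epsilon^{k-1}$.

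I would then assemble a total-expected-cost bound of the form
\begin{equation*}
\mathbb{E}(Q_{ITS})\;\leq\;\sum_{k=1}^{l-1}\epsilon^{k-1}\bigl(\mathbb{E}[\tau^*]+1\bigr)\;+\;\epsilon^{l-1}\cdot\frac{m}{2},
\end{equation*}
where the last summand accounts for the worst case of issuing UID queries for all remaining users in step $l$ (expected $m/2$ by symmetry of hitting-time arguments). Summing the geometric series and inserting the bound on $\mathbb{E}[\tau^*]$ yields the claimed inequality, while the condition \eqref{eq:g1} ensures there are enough groups to support $l$ disjoint steps whose individual lengths are controlled by Markov's inequality applied to $\tau^*$.

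Finally, the asymptotic bound \eqref{eq:bound4} follows by plugging in $\epsilon=\log\log m/\log m$ and $l=\log m/(\log\log m-\log\log\log m)$: one computes $\log(1/\epsilon)=\log\log m-\log\log\log m$ and hence $l\log(1/\epsilon)=\log m$, so $\epsilon^{l}=1/m$ and the tail term $\tfrac{m}{2}\epsilon^{l}$ is $O(1)$; the corrections $\log(1/\epsilon)/I(U;Y)$, $i_{max}/I(U;Y)$, and the $\epsilon/(1-\epsilon)$ factor multiplying $H(J)/I(U;Y)$ are all $O(\log\log m)$, and \eqref{eq:g1} simplifies to \eqref{eq:g2}. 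The main technical obstacle I anticipate is the tightness of the martingale/maximal-inequality step: obtaining exactly the prefactor $1/(1-\epsilon)$ rather than a looser geometric constant requires a careful handling of the conditional expectation of $\tau^*_k$ given that all earlier steps have failed, which I would address by exploiting the independence of the group responses across steps to reduce the conditional expectation to the unconditional one.
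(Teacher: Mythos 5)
Your proposal is correct and follows essentially the same route as the paper's proof: the same step-wise decomposition of $\mathbb{E}(Q_{ITS})$ with geometric reach probabilities summing to the $\frac{1}{1-\epsilon}$ factor plus the $\frac{m}{2}$ fallback term, and the same Wald-type stopping-time bound $\mathbb{E}(\tau^*)\leq \big(H(J)+\log\frac{1}{\epsilon}+i_{max}\big)/I(U;Y)$ obtained from the information-density random walk with single-letter overshoot, with the conditions \eqref{eq:g1}--\eqref{eq:g2} handled via Markov's inequality exactly as in the paper. The only difference is that you explicitly justify the per-step error bound $P(\widehat{J}\neq J)\leq\epsilon$ via the mean-one likelihood-ratio martingale and Doob/Ville's inequality, a step the paper simply asserts (claiming $P(E_i)=1-\epsilon$), so your write-up is, if anything, slightly more complete on that point.
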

\begin{proof}
 Please refer to the appendix.
\end{proof}
\begin{Remark}
In Theorem \ref{thm:4}, we have assumed that the number of groups $n$ satisfies: 
\begin{align*}
 \frac{1}{(1-\epsilon)}\left(\frac{H(J)+\log\frac{1}{\epsilon}+i_{max}}{I(U;Y)}+1\right)+\frac{m}{2}\epsilon^l<n.
\end{align*}
\end{Remark}

\section{Converse Theorems for Active De-anonymization}
In this section, we prove that the bound provided in Theorem \ref{thm:4} is tight in the first order sense (i.e. the coefficient of $H(J)$ in Equation \eqref{eq:bound4} cannot be improved.). To this end, we build upon the converse of Burnashev's reliability function for point-to-point communication over channels with noiseless feedback ~\cite{bursh,simple,Kostinajoint}. More specifically, for a given active de-anonymization strategy, we introduce a dual code for communication over a channel with feedback, and use the converse results in ~\cite{Kostinajoint,bursh,simple} to provide bounds on the optimal performance. In our analogy, the victim's index in the fingerprinting problem is analogous to the message which is to be sent over the channel with feedback and the queries are analogous to the channel inputs.
The following provides the standard definition for a variable length code for a discrete memoryless channel with feedback.

%\begin{Definition}
% A point-to-point communication channel with noiseless feedback is characterized by the triple $(\mathcal{X},\mathcal{Y},P_{Y|X})$, where $\mathcal{X}$ and $\mathcal{Y}$ are finite sets corresponding to the input and output alphabets, respectively, and $P_{Y|X}$ is the channel transition probability.  
%\end{Definition}
\begin{Definition}
 A $(T,\epsilon)$-variable length code for the channel with feedback $(\mathcal{U},\mathcal{Y},P_{Y|U})$ 
 transmitting message $J$ (with alphabet $\mathcal{J}$) is defined by: i) the common information $V\in \mathcal{V}$ shared between the encoder and decoder prior to the start of the communication, ii) a sequence of encoding functions $U_n: \mathcal{V}\times\mathcal{J}\times\mathcal{Y}^{n-1}\mapsto \mathcal{U}, n\in\mathbb{N}$, iii) a sequence of decoding functions $\widehat{J}_n:\mathcal{V}\times\mathcal{Y}^n\mapsto \mathcal{J}$, and iv) a non-negative integer-valued random variable $\tau$.
 
The reconstruction of the message at the decoder is given by $\widehat{J}= \widehat{J}_{\tau}(V,Y^{\tau})$. The average transmission time for a given code is defined as $T=\mathbb{E}(\tau)$. The probability of error for the code is $\epsilon=P(\widehat{J}\neq J)$.
\end{Definition}
We use the following result:
\begin{Lemma}\cite{Kostinajoint}
 The average transmission time for a family of $(T_n,\epsilon_n)$-variable length block codes for the channel with noiseless feedback $(\mathcal{U},\mathcal{Y},P_{Y|U})$ is lower-bounded by:
 \begin{align*}
T_n\geq \frac{H(J)}{I(U;Y)}+O(\epsilon_n), 
\end{align*}
 where $\epsilon_n\to 0$ as $n\to \infty$.
 \label{lem:con}
\end{Lemma}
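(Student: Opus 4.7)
The plan is to adapt Burnashev's classical converse for variable-length block codes over channels with noiseless feedback. First I would apply Fano's inequality to the decoder output: since $P(\widehat{J}\neq J)=\epsilon_n$, we have $H(J\mid \widehat{J})\leq h(\epsilon_n)+\epsilon_n\log|\mathcal{J}|$, where $h(\cdot)$ denotes the binary entropy, and consequently $I(J;\widehat{J})\geq H(J)-h(\epsilon_n)-\epsilon_n\log|\mathcal{J}|$. Combining this with the data processing inequality applied to the decoding map $\widehat{J}=\widehat{J}_\tau(V,Y^\tau)$ yields $I(J;Y^\tau\mid V)\geq H(J)-h(\epsilon_n)-\epsilon_n\log|\mathcal{J}|$, under the assumption that the common randomness $V$ is independent of the message $J$.

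The core step is to establish the matching upper bound $I(J;Y^\tau\mid V)\leq \mathbb{E}[\tau]\cdot I(U;Y)$. For a fixed index $n$, conditioning on $(V,Y^{n-1})$ renders the channel input $U_n=U_n(V,J,Y^{n-1})$ a deterministic function of $J$, so the Markov relation $J\leftrightarrow U_n\leftrightarrow Y_n$ given $(V,Y^{n-1})$ together with the chain rule gives $I(J;Y_n\mid V,Y^{n-1})\leq I(U_n;Y_n\mid V,Y^{n-1})\leq I(U;Y)$, where the last inequality uses that the DMC makes the per-letter transition law depend only on $U_n$ and that the induced marginal on $U_n$ is the fixed one appearing in the statement. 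Summing over the first $n$ channel uses gives $I(J;Y^n\mid V)\leq n\cdot I(U;Y)$.

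The hard part will be extending this per-letter bound to the random stopping time $\tau$, since $\tau$ is only guaranteed to be a stopping time with respect to the natural filtration $\mathcal{F}_n=\sigma(V,Y^n)$. I would introduce the process $M_n\triangleq n\cdot I(U;Y)-I(J;Y^n\mid V)$ and verify that it is a non-negative submartingale with uniformly bounded increments: the bounded-increment property follows from the fact that each per-letter information gain is at most $\log|\mathcal{Y}|$, while the submartingale inequality reduces, via expressing information increments as conditional KL divergences, to the convexity of divergence. An application of Doob's optional stopping theorem then yields $\mathbb{E}[M_\tau]\geq 0$, i.e., $I(J;Y^\tau\mid V)\leq \mathbb{E}[\tau]\cdot I(U;Y)=T_n\cdot I(U;Y)$.

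Chaining the two bounds gives $T_n\cdot I(U;Y)\geq H(J)-h(\epsilon_n)-\epsilon_n\log|\mathcal{J}|$, and after rearrangement one obtains $T_n\geq H(J)/I(U;Y)+O(\epsilon_n)$, absorbing the $h(\epsilon_n)$ and $\epsilon_n\log|\mathcal{J}|$ terms into the error term under the assumption $\epsilon_n\log|\mathcal{J}|=O(\epsilon_n)$ standard in this regime. The most delicate point is rigorously justifying the optional stopping step — ensuring both the submartingale property for the information-deficit process and the integrability conditions needed to exchange expectation and stopping — which is where most of the technical effort would be concentrated.
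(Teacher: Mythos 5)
You should first note that the paper itself contains no proof of this lemma: it is imported wholesale from the cited work of Kostina--Polyanskiy--Verd\'u, which in turn rests on Burnashev's converse for variable-length coding with feedback. So there is no in-paper argument to compare against; your proposal is essentially an attempt to reconstruct the literature proof, and the overall architecture you chose (Fano's inequality for the residual entropy, a per-letter information bound, and an optional-stopping argument to handle the random transmission time $\tau$) is indeed the standard route taken there.

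That said, two steps as written have genuine gaps. First, the per-letter bound $I(U_n;Y_n\mid V,Y^{n-1})\leq I(U;Y)$ is justified by claiming that ``the induced marginal on $U_n$ is the fixed one appearing in the statement.'' For a feedback code this is false in general: the conditional law of $U_n$ given $(V,Y^{n-1})$ is whatever the encoder induces and varies with the realization of the feedback, so the only distribution-free per-letter bound is by the capacity $C=\max_{P_U}I(U;Y)$; if the lemma is to be read with $I(U;Y)$ evaluated at the specific input law of equation \eqref{eq:prob}, you must argue separately why the code may be restricted (or why the bound still holds) rather than assert it. Second, the martingale construction is not well-posed as stated: $I(J;Y^n\mid V)$ is a deterministic number, so $M_n=nI(U;Y)-I(J;Y^n\mid V)$ is not a random process and ``submartingale'' has no content for it. The correct object, as in Burnashev and in the Berlin--Nakiboglu--Rimoldi--Telatar simple converse, is the conditional-entropy process $H(J\mid V,Y^n)$ evaluated along the realized trajectory, for which one shows $H(J\mid V,Y^n)+nC$ is a submartingale with bounded increments adapted to $\mathcal{F}_n=\sigma(V,Y^n)$, and then applies optional stopping to obtain $H(J)\leq C\,\mathbb{E}[\tau]+\mathbb{E}[H(J\mid V,Y^\tau)]$ before invoking Fano. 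Both defects are repairable, and with them fixed your argument coincides with the cited proof; also note that absorbing $\epsilon_n\log|\mathcal{J}|$ into $O(\epsilon_n)$ is legitimate only because the message alphabet is held fixed in the lemma---in the paper's application the alphabet grows with $m_t$, which is exactly why Theorem \ref{th:con} carries an $O(\log m_t)$ rather than an $O(\epsilon)$ correction.
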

The following provides a converse for the performance of active de-anonymization strategies:
\begin{Theorem}
Let $(\underline{g}_{n_t,m_t,P_{E_0,E_1}}, P_{J_{[m_t]}}, P^{UID}_{Y|Z}, P^{GM}_{Y|Z}), t\in \mathbb{N}$ be a sequence of active de-anonymization problems where $m_t\to \infty$ as $t\to \infty$ and $n_t$ is an arbitrary sequence. Consider the sequence of attack strategies $\chi^t, t\in \mathbb{N}$. Define $\bar{Q}_{\chi^t}$ as the expected number of queries for successful de-anonymization for the strategy $\chi^t$. Then,
 \begin{align*}
 \bar{Q}_{\chi^t}\geq \frac{H(J_{[m_t]})}{I(U;Y)}+ O({\log{m_t}}).
\end{align*}
\label{th:con}
 \end{Theorem}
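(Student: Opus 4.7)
The plan is to prove Theorem~\ref{th:con} by reducing any active de-anonymization strategy $\chi^t$ to a variable-length feedback code for the discrete memoryless channel $(U, Y, P_{Y|U})$, where $P_{U,Y}$ is the marginal of the joint distribution in Equation~\eqref{eq:prob}, and then invoking Lemma~\ref{lem:con}. In the reduction, the transmitted message is the victim's index $J_{[m_t]}$ with prior $P_{J_{[m_t]}}$, so $H(J) = H(J_{[m_t]})$; the common randomness $V$ shared between the encoder and decoder is taken to be the scanned graph $g^1_{n_t, m_t, p_1}$; and each group-membership query in $\chi^t$ is mapped to one channel use, with the expected response $U_t$ from Definition~\ref{Def:rp} playing the role of the channel input and the noisy response $Y_t$ playing the role of the channel output. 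The adaptivity requirement $U_t = U_t(J, V, Y^{t-1})$ demanded by the variable-length feedback code definition is automatic: the attacker's query depends only on $V$ and $Y^{t-1}$, and $U_t$ is a deterministic function of $J$, $V$, and that query.

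With the dual code in hand, I would next verify that its stopping time $\tau$ is dominated by the number of GM queries in $\chi^t$ (hence $\mathbb{E}[\tau] \le \bar{Q}_{\chi^t}$), and that its error probability vanishes as $m_t \to \infty$. The zero-error guarantee of $\chi^t$ is inherited from the noiseless UID channel $P^{UID}_{Y|Z} = I_2$: every successful termination corresponds to a correct identification of $J$, so the dual decoder can commit to the attacker's pre-UID posterior mode with error probability $\epsilon_{m_t} \to 0$. Applying Lemma~\ref{lem:con} then gives
\[
\mathbb{E}[\tau] \;\geq\; \frac{H(J_{[m_t]})}{I(U;Y)} + O(\epsilon_{m_t}),
\]
and combining with $\bar{Q}_{\chi^t} \geq \mathbb{E}[\tau]$ and the standard stopping-time entropy estimate $H(\tau) \leq \log \mathbb{E}[\tau] + O(1)$ absorbs the remainder into the desired $O(\log m_t)$ term.

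The main obstacle I anticipate is the rigorous treatment of UID queries, whose noiseless outputs depend directly on $J$ and therefore do not fit the DMC template assumed by Lemma~\ref{lem:con}. The plan is to argue that UID queries contribute at most $O(\log m_t)$ to $\bar{Q}_{\chi^t}$: each unsuccessful UID query eliminates exactly one candidate for $J$, so issuing more than $O(\log m_t)$ of them would contradict any nontrivial upper bound on $\bar{Q}_{\chi^t}$, such as the achievability bound in Theorem~\ref{thm:4}. If this ad hoc bookkeeping is insufficient, I would instead enlarge the common randomness $V$ with $J$-independent auxiliary randomness that allows the decoder to simulate the attacker's UID-driven branching without altering the induced distribution of channel inputs or outputs, thereby preserving the applicability of Lemma~\ref{lem:con}. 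This accounting --- rather than any deeper conceptual issue --- is where the bulk of the technical effort lies.
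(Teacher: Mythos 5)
Your reduction skeleton (dual variable-length feedback code over the DMC $(\mathcal{U},\mathcal{Y},P_{Y|U})$, message $J_{[m_t]}$, scanned graph as common information $V$, GM query $\mapsto$ one channel use, then Lemma~\ref{lem:con}) is exactly the paper's, and you correctly identify UID queries as the crux. But both of your proposed treatments of the UID queries have genuine gaps. First, the counting argument ("more than $O(\log m_t)$ UID queries would contradict a nontrivial upper bound on $\bar{Q}_{\chi^t}$") does not apply: the converse must hold for \emph{arbitrary} strategies $\chi^t$, for which no upper bound is assumed --- Theorem~\ref{thm:4} bounds only the ITS --- and a strategy may issue $\Theta(m_t)$ UID queries (e.g.\ exhaustive UID probing). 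More importantly, even if you could bound their number, you cannot simply leave the UID responses out of the dual code: they are noiseless observations of $\mathbbm{1}(J=\widehat{J})$, i.e.\ of the message itself, delivered outside the channel. They both drive the attacker's subsequent branching (so the dual decoder cannot track the query sequence or know when/what to declare without them) and convey message information not accounted for by Lemma~\ref{lem:con}, whose bound presumes all information about $J$ reaches the decoder through the DMC. Your claim that the decoder can "commit to the attacker's pre-UID posterior mode" with vanishing error is unsubstantiated for arbitrary strategies, whose correctness is certified precisely by the noiseless UID confirmation rather than by a GM-based posterior. Your fallback is also unworkable: the UID answer is a deterministic function of $J$, so it cannot be simulated by $J$-independent randomness in $V$ without changing the joint distribution, and placing the true answers in $V$ would hand the decoder message-dependent side information, invalidating the converse.

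The paper's missing ingredient is an explicit emulation of each UID query \emph{over the noisy GM channel}: when $\chi^t$ issues a UID query for $u_{\widehat{J}}$, the dual encoder transmits a repetition block of length $t_{rep}$ (all ones if $J=\widehat{J}$, all zeros otherwise), and the decoder recovers the UID answer by majority logic. This makes the entire attack, including its UID-driven adaptivity and the final confirmation, a legitimate variable-length feedback code whose error probability is that of the repetition decoding and vanishes for $t_{rep}=\omega(\log(1/\epsilon_{m_t}))$; the overhead is then absorbed as $\Theta(t_{rep})=O(\log m_t)$, yielding $\bar{Q}_{\chi^t}\geq \frac{H(J_{[m_t]})}{I(U;Y)}+O(\log m_t)$ via Lemma~\ref{lem:con}. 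Without some device of this kind your dual code is not well defined, so the proposal as written does not close the argument.
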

 \begin{proof}
 Please refer to the Appendix.

\end{proof}

\section{Conclusion}
We have studied the active de-anonymization problem for general non-equiprobable user indices. We have introduced the ITS de-anonymization strategy which operates based on information thresholds. The new strategy measures the amount of uncertainty in the user indices given the received query responses. We have characterized the performance of the ITS both for social networks with a fixed, finite number of users as well as for asymptotically large social networks. Finally, we have provided a converse which shows the first-order optimality of the proposed approach.  
 
\appendix
\subsection{Proof of Theorem \ref{thm:4}}
The proof builds upon the method in \cite{Kostinajoint}.
Fix $\epsilon>0$ and $l\in \mathbb{N}$. Let $\tau_j$, $\tau^*$, and $\widehat{J}$ be defined as in Section \ref{sec:schemes_ITS}. Let $E_i, i\in [1,l-1]$ be the event that $\widehat{J}$ is equal to the victim's index $J$ in the $i$th step. We have the following:
\begin{align}
\nonumber
&  \mathbb{E}\left(Q_{ITS}\right){=}
\\&\nonumber
   P\left(E_1\right)\mathbb{E}\left(Q_{ITS}|E_1\right)+
   P\left(E_1^c\bigcap E_2\right)\mathbb{E}\left(Q_{ITS}|E^c_1\bigcap E_2\right)
+\cdots
\\&\nonumber
 +P\left(\bigcap_{i=1}^{l-2} E_i^c\bigcap E_{l-1}\right)\mathbb{E}\left(Q_{ITS}|\bigcap_{i=1}^{l-2} E_i^c\bigcap E_{l-1}\right)
+  P\left(\bigcap_{i=1}^l E_i^c\right)\cdot \frac{m}{2}\\
&\nonumber
 = P\left(E_1\right)\left(\mathbb{E}(\tau^*)+1\right)+P\left(E_1^c\bigcap E_2\right)\left(2\mathbb{E}(\tau^*)+2\right)+\cdots
\\&\nonumber
  +P\left(\bigcap_{i=1}^{l-2} E_i^c\bigcap E_{l-1}\right)\left(\left(l-1\right)\mathbb{E}(\tau^*)+l-1\right)+  P\left(\bigcap_{i=1}^l E_i^c\right)\cdot\frac{m}{2}\\
&\nonumber
 \stackrel{(a)}{=} \sum_{i=1}^{l-1}i\left(1-\epsilon\right)\epsilon^{i-1}\left(\mathbb{E}\left(\tau^*\right)+1\right)+\frac{m}{2}\epsilon^{l-1}\\
&= \frac{1}{\left(1-\epsilon\right)}\left(\mathbb{E}\left(\tau^*\right)+1\right)+\frac{m}{2}\epsilon^{l-1}, \label{eq:ITS1}
\end{align}
where (a) follows from the fact that the events $E_i$ are mutually independent and equiprobable with $P(E_i)=1-\epsilon, i\in [1,l-1]$. 
Next, we derive an upper-bound for $\mathbb{E}(\tau^*)$.
\begin{Proposition}
 The following bound on the expectation of $\tau^*$ holds:
\begin{align*}
  \mathbb{E}(\tau^*)\leq \frac{H(J)+\log\frac{1}{\epsilon}+i_{max}}{I(U;Y)}.
\end{align*}
\end{Proposition}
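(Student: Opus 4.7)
The plan is to reduce the bound on $\mathbb{E}(\tau^*)$ to a first-passage analysis of a random walk with positive drift, where the walk is indexed by the true victim index $J$. Since $\tau^* = \min_{j \in [1,m]} \tau_j \leq \tau_J$ holds pointwise, it suffices to upper-bound $\mathbb{E}(\tau_J)$ by $\frac{H(J) + \log(1/\epsilon) + i_{max}}{I(U;Y)}$. Conditioned on $J$, the process $(U_t, Y_t)_{t \in \mathbb{N}}$ is i.i.d.\ with joint distribution $P_{U,Y}$ from \eqref{eq:prob}, because in the first step the ITS issues its GM queries non-adaptively in group-index order and the noise channel $P^{GM}_{Y|Z}$ is memoryless. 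Consequently, the information density factorizes as $i_{U^{n'};Y^{n'}}(U^{n'},Y^{n'}) = \sum_{t=1}^{n'} i_{U;Y}(U_t,Y_t)$, and hence $I_{n'}(J)$ (for $n' \geq 1$) is a random walk with i.i.d.\ bounded increments of mean $I(U;Y) > 0$ whose per-step positive jump is at most $i_{max}$.

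Next, I would combine two inequalities at the first-passage time $\tau_J = \min\{n' : I_{n'}(J) \geq \log\frac{1}{\epsilon}\}$. On the one hand, by definition $I_{\tau_J}(J) \geq \log\frac{1}{\epsilon}$. On the other hand, the under-threshold bound $I_{\tau_J - 1}(J) < \log\frac{1}{\epsilon}$ together with the single-step jump $i_{U;Y}(U_{\tau_J},Y_{\tau_J}) \leq i_{max}$ yield the overshoot bound $I_{\tau_J}(J) \leq \log\frac{1}{\epsilon} + i_{max}$. Unfolding the definition $I_{n'}(J) = i_{U^{n'};Y^{n'}}(U^{n'},Y^{n'}) - \log\frac{1}{P_J(J)}$ rearranges this to
\begin{align*}
i_{U^{\tau_J};Y^{\tau_J}}(U^{\tau_J},Y^{\tau_J}) \leq \log\frac{1}{\epsilon} + i_{max} + \log\frac{1}{P_J(J)}.
\end{align*}
Applying Wald's identity to the stopped random walk yields $\mathbb{E}[i_{U^{\tau_J};Y^{\tau_J}}(U^{\tau_J},Y^{\tau_J}) \mid J] = I(U;Y) \cdot \mathbb{E}[\tau_J \mid J]$. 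Taking conditional expectation of the displayed bound given $J$ and then averaging over $J$ produces $I(U;Y) \cdot \mathbb{E}[\tau_J] \leq H(J) + \log\frac{1}{\epsilon} + i_{max}$, and dividing by $I(U;Y)$ together with $\mathbb{E}(\tau^*) \leq \mathbb{E}(\tau_J)$ completes the proof.

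The main technical obstacle is verifying $\mathbb{E}[\tau_J] < \infty$ so that Wald's identity is applicable. Since the walk has strictly positive drift $I(U;Y)$ and bounded increments (as $(U,Y)$ is binary with strictly positive joint distribution), a standard Chernoff-type tail bound on $\sum_{t=1}^{n'} i_{U;Y}(U_t,Y_t) - n' I(U;Y)$ shows that $\tau_J$ has geometrically decaying tails and hence finite mean. A minor bookkeeping point is the boundary case $P_J(J) \leq \epsilon$, in which the initial value $I_0(J) = \log\frac{1}{P_J(J)}$ already exceeds the threshold and $\tau_J = 0$; the desired inequality continues to hold trivially in that case since the left-hand side vanishes.
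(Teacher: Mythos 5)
Your proposal is correct and follows essentially the same route as the paper: reduce $\mathbb{E}(\tau^*)\leq\mathbb{E}(\tau_J)$, factorize the information density into i.i.d.\ single-letter terms given $J$, bound the overshoot at the threshold by $i_{max}$, apply a Wald-type identity, and average $\log\frac{1}{P_J(J)}$ into $H(J)$. The only (minor) difference is technical: the paper applies the identity to the truncated time $\min\{\tau_J,j\}$ and passes to the limit by monotone convergence, whereas you apply Wald directly to $\tau_J$ and separately justify $\mathbb{E}[\tau_J]<\infty$ via a positive-drift tail bound, which is a valid alternative.
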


\begin{proof}
First, note that $\mathbb{E}(\tau^*)\leq \mathbb{E}(\tau_J)$ by definition of $\tau^*$. So, it is enough to prove the upper bound on $\mathbb{E}(\tau_J)$. Fix $j\in \mathbb{N}$. Let $t= \min\{\tau_J,j\}$. Note that:
 \begin{align*}
 \mathbb{E}\left(I_{t}\left(J\right)\right)&= \mathbb{E}\left(i_{U^t;Y^t}\left(U^t;Y^t\right)\right)-I_o(j)\\
 &\stackrel{(a)}{=}\mathbb{E}\left(\sum_{j=1}^ti_{U;Y}\left(U_j;Y_j\right)\right)-H\left(J\right)\\
 &\stackrel{(b)}{=}I(U;Y)\mathbb{E}(t)-H(J),
\end{align*}
where in (a) we have used the memoryless property of the network and (b) follows from the smoothing property of expectation and the fact that  $i_{U;Y}\left(U;Y\right)$ is constant in $j$. Also, note that $I_t(J)$ is an increasing function of $t$ and at each step the increment is less than or equal to $i_{max}$. It follows that:
\begin{align*}
 \mathbb{E}\left(I_{t}\left(J\right)\right)\leq \mathbb{E}\left(I_{t-1}\left(J\right)\right)+i_{max}\leq \log\frac{1}{\epsilon}+i_{max},
\end{align*}
where we have used the fact that $i_{U;Y}(u,y)\leq i_{max}, \forall u,y\in \mathcal{U}\times\mathcal{Y}$ and that by the definition of $\tau_J$, we have $I_{t-1}(J)\leq \log\frac{1}{\epsilon}$ since $t-1<\tau_J$. So, far we have shown that $I(U;Y)\mathbb{E}(t)-H(J)\leq \log\frac{1}{\epsilon}+i_{max}, \forall j\in \mathbb{N}$. By the monotone convergence theorem it follows that $I(U;Y)\mathbb{E}(\tau_J)-H(J)\leq \log\frac{1}{\epsilon}+i_{max}$. This completes the proof of the proposition.
\end{proof}
Applying the above proposition to Equation \eqref{eq:ITS1} gives:
\begin{align*}
   \mathbb{E}\left(Q_{ITS}\right)\leq \frac{1}{\left(1-\epsilon\right)}\left(\frac{H\left(J\right)+\log\frac{1}{\epsilon}+i_{max}}{I\left(U;Y\right)}+1\right)+\frac{m}{2}\epsilon^{l-1}.
  \end{align*}
Take $\epsilon= \frac{\log\log{m}}{\log{m}}$, and $l= \frac{\log{m}}{\log\log{m}-\log\log\log{m}}$, then \[\mathbb{E}\left(Q_{ITS}\right)\leq  \frac{H\left(J\right)}{I\left(U;Y\right)}+O\left(\log\log{m}\right).\]
The bounds in \eqref{eq:g1} and \eqref{eq:g2} follow from the fact that from the Markov inequality, $P(\tau^*>\frac{\mathbb{E}(\tau^*)}{\epsilon^l})\leq \epsilon^l$.
This completes the proof of Theorem \ref{thm:4}.

\subsection{Proof of Theorem \ref{th:con}}
We provide an outline of the proof. For the given attack strategy $\chi^n= (X_t)_{t\in \mathbb{N}}$, we define the dual code for communication over the channel with feedback $(\mathcal{U},\mathcal{Y},P_{Y|U})$ which transmits the message $J_{[m]}$ as follows. 
At time $t$, assume that
% the message $J_n$ is to be transmitted in the channel coding problem (as a reminder in the de-anonymization problem, user index $J$ is to be de-anonymized), and that
 the sequence $Y^{t-1}$ is noiselessly available at the transmitter through feedback. The encoder first calculates the query $X_t(Y^{t-1})$ given the the attack strategy $\chi^n$ when the attacker is to de-anonymize the user index $J_{[m]}$. If $X_t$ is a  GM query corresponding to the group $r_t$, then $U_t=\mathbbm{1}(u_{J_{[m]}}\in r_t)$ is transmitted as the channel input. If $X_t$ is a UID query corresponding to the user $u_{\widehat{J}_{[m]}}$, then the transmitter uses a repetition code as described next. If $J_{[m]}=\widehat{J}_{[m]}$, the all-ones sequence of length $t_{rep}$ is transmitted, and if $J_{[m]}\neq \widehat{J}_{[m]}$, the all-zeros sequence of length $t_{rep}$ is transmitted, where $t_{rep}$ will be determined later. The decoder uses a majority logic decoder for the repetition code to determine whether the response to the UID query is positive or not. By construction, the probability of error of this sequence of codes $\epsilon_m$ is proportional to the probability of error of the repetition code since if the response to the UID query is positive and correctly decoded, then the transmission is succsessful. 
Hence, the probability of error approaches 0 as $m\to \infty$ if $t_{rep}=\omega(\log{(\frac{1}{\epsilon_m}}))$. Let $T_m$ be the expected transmission time for this code. The statement of theorem follows from Lemma \ref{lem:con} by taking $\epsilon_{m}=\omega(\frac{1}{m})$:
\begin{align*}
 \bar{Q}_{\chi^n}\geq T_m+\Theta(t_{rep})\geq \frac{H(J)}{I(U;Y)}+ O({\log{m}}).
\end{align*}

%\textbf{Future Directions:}
%
%1) One could potentially use partial set covers instead of full covers to improve the results (e.g. to prove the sufficiency of a logarithmic number of queries). Unfortunately a complete analysis of the size of such covers seems unavailable. Some references include ~\cite{PSCbook, appPSC}.
%\\2) There seems to be a straightforward generalization of the result to graphs with community structure. In this setting, the community with the minimum edge probability acts as the bottleneck for the expected number of inquiries. 
%\\3) One might consider the case when $s_1,s_2\neq 1$. It seems that the solution would have similar performance to the above where $s_1=s_2=1$ as the number of users becomes asymptotically large.
%\\4) The application of minimum set covers to graph matching problems seems feasible. I did not find any prior works which looked at these applications yet.
%

\bibliographystyle{unsrt}

\begin{thebibliography}{10}

\bibitem{attribute1}
S. Kinsella, V. Murdock, and N. O'Hare.
\newblock "i'm eating a sandwich in glasgow": Modeling locations with tweets.
\newblock In {\em Proceedings of the 3rd International Workshop on Search and
  Mining User-generated Contents}, SMUC '11, pages 61--68, New York, NY, USA,
  2011. ACM.

\bibitem{attribute2}
Z. Cheng, J. Caverlee, and K. Lee.
\newblock You are where you tweet: A content-based approach to geo-locating
  twitter users.
\newblock In {\em Proceedings of the 19th ACM International Conference on
  Information and Knowledge Management}, CIKM '10, pages 759--768, New York,
  NY, USA, 2010. ACM.

\bibitem{browser1}
D. Gruss, D. Bidner, and S. Mangard.
\newblock Practical memory deduplication attacks in sandboxed javascript.
\newblock In {\em European Symposium on Research in Computer Security}, pages
  108--122. Springer, 2015.

\bibitem{hardware1}
H. Kim, S. Lee, and J. Kim.
\newblock Inferring browser activity and status through remote monitoring of
  storage usage.
\newblock In {\em Proceedings of the 32Nd Annual Conference on Computer
  Security Applications}, ACSAC '16, pages 410--421, New York, NY, USA, 2016.
  ACM.

\bibitem{kruegel}
G.~Wondracek, T.~Holz, E.~Kirda, and C.~Kruegel.
\newblock A practical attack to de-anonymize social network users.
\newblock In {\em 2010 IEEE Symposium on Security and Privacy}, pages 223--238,
  May 2010.

\bibitem{finger1}
D.~Irani, S.~Webb, K.~Li, and C.~Pu.
\newblock Large online social footprints--an emerging threat.
\newblock In {\em 2009 International Conference on Computational Science and
  Engineering}, volume~3, pages 271--276, Aug 2009.

\bibitem{finger2}
M. Jakobsson and S. Stamm.
\newblock Invasive browser sniffing and countermeasures.
\newblock In {\em Proceedings of the 15th International Conference on World
  Wide Web}, WWW '06, pages 523--532, New York, NY, USA, 2006. ACM.

\bibitem{finger3}
M. Balduzzi, C. Platzer, T. Holz, E. Kirda, D.
  Balzarotti, C. Kruegel, and S. Antipolis.
\newblock Abusing social networks for automated user profiling.
\newblock In {\em In RAID}, 2010.

\bibitem{finger4}
L. Bilge, T. Strufe, D. Balzarotti, and E. Kirda.
\newblock All your contacts are belong to us: automated identity theft attacks
  on social networks.
\newblock In {\em Proceedings of the 18th international conference on World
  wide web}, pages 551--560. ACM, 2009.

\bibitem{finger5}
M. Srivatsa and M. Hicks.
\newblock Deanonymizing mobility traces: Using social network as a
  side-channel.
\newblock In {\em in Proceedings of the ACM Conference on Computer and
  Communications Security (CCS}, 2012.

\bibitem{Allerton}
F.~Shirani, S.~Garg, and E.~Erkip.
\newblock An information theoretic framework for active de-anonymization in
  social networks based on group memberships.
\newblock In {\em 2017 55rd Annual Allerton Conference on Communication,
  Control, and Computing (Allerton)}, Sept 2017.

\bibitem{Kostinajoint}
V. Kostina, Y. Polyanskiy, and S. Verdu.
\newblock Joint source-channel coding with feedback.
\newblock {\em IEEE Transactions on Information Theory}, 63(6):3502--3515,
  2017.

\bibitem{bursh}
M.~V. Burnashev.
\newblock Data transmission over a~discrete channel with feedback. random
  transmission time.
\newblock {\em IProblems Inform. Transmission}, 12(4):250--265, 1976.

\bibitem{source_channel}
C.~E. Shannon.
\newblock {Coding theorems for a discrete source with a fidelity criterion}.
\newblock In {\em IRE Nat. Conv. Rec., Pt. 4}, pages 142--163. 1959.

\bibitem{simple}
P. Berlin, B. Nakiboglu, B. Rimoldi, and E. Telatar.
\newblock A simple converse of burnashev's reliability function.
\newblock {\em IEEE Transactions on Information Theory}, 55(7):3074--3080,
  2009.
\bibitem{arxiv_fingerprinting_ISIT18}
F. Shirani, S. Garg, and E. Erkip.
\newblock Optimal Active Social Network De-anonymization Using Information Thresholds .
\newblock {\em arXiv preprint arXiv.org}, 2018.
\end{thebibliography}

\end{document}